\newtheorem{theorem}{Theorem}
\newtheorem{lemma}{Lemma}
\theoremstyle{definition}
\newtheorem{definition}{Definition}
\newtheorem{remark}{Remark}
\newtheorem{example}{Example}
\newcommand{\R}{\mathbb{R}}
\newcommand{\mc}[1]{\mathcal{#1}}
\newcommand{\Kinfe}{\mathcal{K}_{\infty}^e}
\newcommand{\Ke}{\mathcal{K}^e}
\newcommand{\ReLU}{{\operatorname{ReLU}}}
\newcommand{\bzero}{\mathbf{0}}
\renewcommand{\bf}{\mathbf{f}} 
\newcommand{\bg}{\mathbf{g}}
\newcommand{\bk}{\mathbf{k}}
\newcommand{\bu}{\mathbf{u}}
\newcommand{\bx}{\mathbf{x}}
\definecolor{myblue}{RGB}{49, 114, 174}
\definecolor{myred}{rgb}{0.796, 0.235, 0.2}
\definecolor{mygreen}{rgb}{0.22, 0.596, 0.149}
\definecolor{mypurple}{rgb}{0.584,0.345,0.698}
\title{\textbf{Rectified Control Barrier Functions for High-Order Safety Constraints}}
\author{Pio Ong$^*$, Max H. Cohen$^*$, Tamas G. Molnar, and Aaron D. Ames %
\thanks{PO, MC, and AA are with the Department of Mechanical and Civil Engineering, California Institute of Technology, Pasadena, CA \texttt{\{pioong,maxcohen,ames\}@caltech.edu}.}
\thanks{TM is with the Department of Mechanical Engineering, Wichita State University, Wichita, KS \texttt{\{tamas.molnar\}@wichita.edu}.}
\thanks{$^*$Both authors contributed equally.}
\thanks{This research was supported by the Technology Innovation Institute (TII), Boeing, and NSF CPS Award \#1932091.}
}
\begin{document}
\maketitle
\begin{abstract}
    This paper presents a novel approach for synthesizing control barrier functions (CBFs) from high relative degree safety constraints: \emph{Rectified CBFs (ReCBFs)}. We begin by discussing the limitations of existing High-Order CBF approaches and how these can be overcome by incorporating an activation function into the CBF construction. We then provide a comparative analysis of our approach with related methods, such as CBF backstepping. Our results are presented first for safety constraints with relative degree two, then for mixed-input relative degree constraints, and finally for higher relative degrees. The theoretical developments are illustrated through simple running examples and an aircraft control problem.
\end{abstract}

\section{Introduction}
Control barrier functions (CBFs) \cite{AmesTAC17,AmesECC19} are a useful tool for safety-critical control systems, providing a way to synthesize controllers enforcing state constraints. One of the main advantages of CBFs is the ease of control synthesis using methods such as quadratic programming~\cite{AmesTAC17}, safety filters~\cite{GurrietICCPS18}, or closed-form solutions~\cite{SontagSCL89,PioCDC19}. 
Moreover, there exist numerous extensions of CBFs that address properties beyond safety such as robustness and stability \cite{AmesLCSS19}. Nevertheless, these controller design techniques assume that a CBF is already given for the safety constraint considered.

Constructing CBFs can be challenging, especially when dealing with safety constraints with higher relative degrees. The most popular approach for addressing this issue is High-Order CBFs (HOCBFs)~\cite{WeiTAC22,TanTAC22,SreenathACC16,BreedenAutomatica23,XuAutomatica18}. While effective in some cases, HOCBFs are not traditional CBFs. Many standard results associated with CBFs, such as robustness and stability, do not readily transfer to HOCBFs, and extending these results to HOCBFs is often nontrivial \cite{TanTAC22,XuTAC24,TeelTAC24}.

Another approach to constructing CBFs is backstepping~\cite{AndrewCDC22,AbelTAC23}, which produces CBFs rather than HOCBFs. This method requires systems to be in strict feedback form, or transformable into strict feedback form via output coordinates~\cite{CohenLCSS24}.
Backstepping involves designing a sequence of smooth virtual controllers for a sequence of auxiliary systems, which increases the complexity of control design compared to HOCBFs. 
While recent advancements~\cite{PioCDC19,CohenLCSS23} have made the design of these virtual controllers systematic, the requirements on the system's structure may preclude the application of backstepping to more complex systems.

An advantage of backstepping over HOCBFs is its ability to handle constraints with mixed-input relative degree, in the sense of independent inputs appearing at different orders of derivatives. In the context of HOCBFs, \cite{WeiACC22} addresses this issue using integral control \cite{AmesLCSS21} to dynamically extend inputs, materializing them at different relative degrees. While this enables controller synthesis, it obscures the original inputs in the design process, making it difficult to analyze or minimize control effort.

The main contribution of this paper is the development of a method for constructing CBFs from safety constraints with higher relative degrees. Our approach extends HOCBF ideas by introducing activation functions that consider HOCBF constraints only when necessary. 
The result is a \emph{Rectified Control Barrier Function (ReCBF)}, rather than a HOCBF, that inherits existing properties of CBFs such as stability and robustness. In addition, our approach can generate true CBFs from existing HOCBFs, and it is better suited to handle safety constraints with a weak relative degree for which HOCBF may struggle. We discuss our method by focusing first on safety constraints with relative degree two, and then we move on to mixed-input and higher relative degree constraints.  Moreover, we provide a comparative analysis of ReCBFs with other methods such as HOCBF and backstepping, and illustrate how the main ideas presented herein may also be adapted to these approaches. Finally, we apply our method to a fixed-wing aircraft control problem.
  
\section{Preliminaries and Problem Formulation}
\subsection{Control Barrier Functions}
Consider a nonlinear control affine system\footnote{A continuous function $\alpha\,:\,(-a,b)\rightarrow\R$, $a,b\in\R_{>0}$, is said to be an extended class $\mc{K}$ function ($\alpha\in\mc{K}^e$) if $\alpha(0)=0$ and $\alpha$ is strictly increasing. If $a=b=\infty$ and $\lim_{r\rightarrow\pm\infty}\alpha(r)=\pm\infty$ then $\alpha$ is said to be an extended class $\mc{K}_{\infty}$ function ($\alpha\in\mc{K}^e_{\infty}$).
For a continuously differentiable function $\alpha\,:\,\R\rightarrow\R$, we define ${\alpha'(r)\coloneqq\frac{{\rm d}\alpha}{{\rm d}r}(r)}$. With an abuse of terminology, we say that a function is smooth if it is differentiable as many times as necessary. For a smooth function $h\,:\,\R^n\rightarrow\R^p$ and vector field $\bf\,:\,\R^n\rightarrow\R^n$ we define $L_{\bf}h(\bx)\coloneqq \pdv{h}{\bx}(\bx)\bf(\bx)$ as the Lie derivative of $h$ along $\bf$ with higher order Lie derivatives denoted by $L_{\bf}^{i}h(\bx)\coloneqq \pdv{L_{\bf}^{i-1}h}{\bx}(\bx)\bf(\bx)$. }:
\begin{equation}\label{eq:control-affine-dyn}
    \dot{\bx} = \bf(\bx) + \bg(\bx)\bu,
\end{equation}
with state $\bx\in\R^n$ and input $\bu\in\R^m$, where $\bf\,:\,\R^n\rightarrow\R^n$ and $\bg\,:\,\R^n\rightarrow\R^{n\times m}$ are smooth functions.
Given a locally Lipschitz feedback controller $\bk\,:\,\R^n\rightarrow\R^m$ for  \eqref{eq:control-affine-dyn}, 
the closed-loop system with $\bu=\bk(\bx)$ and initial condition ${\bx_0\in\R^n}$ admits a unique continuously differentiable trajectory $\bx\,:\,I(\bx_0)\rightarrow\R^n$ defined on a maximal interval of existence $I(\bx_0)\subseteq\R_{\geq0}$. Our main objective in this paper is to design feedback controllers $\bk$ such that the closed-loop system satisfies state constraints $\bx(t)\in\mc{C}$ along trajectories, where $\mc{C}\subset\R^n$ is a state constraint set. This is linked to the concept of forward invariance: a set $\mc{S}\subset\R^n$ is said to be forward invariant for the closed-loop system if for each initial condition $\bx_0\in\mc{S}$, the resulting trajectory satisfies $\bx(t)\in\mc{S}$ for all $t\in I(\bx_0)$. While we may wish to design controllers that render the state constraint set $\mc{C}$ forward invariant, such a controller may not exist, and one must instead search for a subset $\mc{S}\subset\mc{C}$ that can be rendered forward invariant. A popular approach to designing controllers enforcing forward invariance of such sets is through CBFs. 

\begin{definition}[\cite{AmesTAC17}]\label{def:cbf}
    A continuously differentiable function $h\,:\,\R^n\rightarrow\R$ defining a set $\mc{S}$ as:
    \begin{equation}\label{eq:S}
        \begin{aligned}
            \mc{S} \coloneqq & \{\bx\in\R^n\,:\,h(\bx)\geq0\}, \\ 
        \end{aligned}
    \end{equation}
    is said to be a CBF for \eqref{eq:control-affine-dyn} on $\mc{S}\subset\R^n$ if there exists $\alpha\in\Ke$ and an open set $\mc{E}\supset\mc{S}$ such that for all $\bx\in\mc{E}$:
    \begin{equation}\label{eq:cbf}
        \sup_{\bu\in\R^m}\left\{L_{\bf}h(\bx) + L_{\bg}h(\bx)\bu \right\} \geq -\alpha(h(\bx)).
    \end{equation}
\end{definition}
The main utility of CBFs is that any locally Lipschitz controller $\bk(\cdot)$ satisfying \eqref{eq:cbf} enforces forward invariance of $\mc{S}$ \cite{AmesTAC17}. In this paper, we focus on constraint sets of the form:
\begin{equation}\label{eq:C}
    \mc{C} \coloneqq \left\{\bx\in\R^n\,:\,\psi(\bx) \geq 0 \right\},
\end{equation}
where ${\psi\,:\,\R^n\rightarrow\R}$ is smooth, and seek CBFs with corresponding zero superlevel sets contained within $\mc{C}\supset\mc{S}$. The following lemma outlines conditions for verifying CBFs. 

\begin{lemma}[\cite{jankovic2018robust}]\label{lemma:cbf}
    A continuously differentiable function $h\,:\,\R^n\rightarrow\R$ is a CBF for \eqref{eq:control-affine-dyn} on a set $\mc{S}$ as in \eqref{eq:S} if and only if there exists $\alpha\in\Ke$ and an open set $\mc{E}\supset\mc{S}$ such that:
    \begin{equation}\label{eq:Lgh=0}
        L_{\bg}h(\bx)=\bzero \implies L_{\bf}h(\bx) \geq - \alpha(h(\bx)),\quad \forall \bx\in \mc{E}.
    \end{equation}
\end{lemma}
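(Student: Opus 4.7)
The plan is to prove both directions of the equivalence by partitioning $\mc{E}$ according to whether $L_{\bg}h(\bx)$ vanishes or not, and observing that the supremum in \eqref{eq:cbf} behaves very differently in these two cases.

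For the forward direction, I would assume $h$ is a CBF with some $\alpha \in \Ke$ and open set $\mc{E} \supset \mc{S}$, so that \eqref{eq:cbf} holds on $\mc{E}$. I would then pick any $\bx \in \mc{E}$ with $L_{\bg}h(\bx) = \bzero$. In this case the term $L_{\bg}h(\bx)\bu$ in the supremum vanishes for every $\bu \in \R^m$, so the supremum reduces to $L_{\bf}h(\bx)$. The CBF inequality then gives exactly $L_{\bf}h(\bx) \geq -\alpha(h(\bx))$, which is the desired implication.

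For the reverse direction, I would assume \eqref{eq:Lgh=0} holds with some $\alpha$ and $\mc{E}$, and fix an arbitrary $\bx \in \mc{E}$. If $L_{\bg}h(\bx) = \bzero$, the hypothesis directly yields $L_{\bf}h(\bx) \geq -\alpha(h(\bx))$, and since $\sup_{\bu}\{L_{\bf}h(\bx) + L_{\bg}h(\bx)\bu\} = L_{\bf}h(\bx)$ in this case, inequality \eqref{eq:cbf} holds. Otherwise, $L_{\bg}h(\bx) \neq \bzero$, and I would exhibit an unbounded direction for the supremum: taking $\bu = t\,L_{\bg}h(\bx)\T$ with $t \to +\infty$ makes $L_{\bg}h(\bx)\bu = t\|L_{\bg}h(\bx)\|^2 \to +\infty$, so the supremum is $+\infty$ and \eqref{eq:cbf} holds trivially.

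There is no real obstacle here; the argument is essentially a case split on $L_{\bg}h(\bx)$. The only subtle point worth being explicit about is that the supremum is taken over the unconstrained set $\R^m$, which is what allows the second case to blow up to $+\infty$; if the input set were compact, this dichotomy would fail and a more delicate argument would be required. Because this is not the situation here, the proof remains short and follows directly from the definition of CBF in Definition~\ref{def:cbf}.
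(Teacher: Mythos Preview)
Your argument is correct. The paper does not supply its own proof of this lemma; it is stated as a known result cited from~\cite{jankovic2018robust}. The case split you give on whether $L_{\bg}h(\bx)$ vanishes is exactly the standard argument for this characterization, and your observation that the unconstrained input set $\R^m$ is what makes the supremum blow up when $L_{\bg}h(\bx)\neq\bzero$ is the only point that requires any care.
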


\subsection{High-Order Control Barrier Functions}
While Lemma \ref{lemma:cbf} provides a simple condition for verifying a candidate CBF, proposing such a function in the first place is non-trivial for high-dimensional systems where inputs may not directly affect the safety constraint. A popular way to overcome this challenge is via HOCBFs \cite{WeiTAC22,TanTAC22} wherein a candidate CBF is dynamically extended to a new function that may serve as a certificate of safety. The success of this technique relies on the notion of relative degree.
\begin{definition}
    A smooth function $\psi\,:\,\R^n\rightarrow\R$ is said to have relative degree $r\in\mathbb{N}$ for \eqref{eq:control-affine-dyn} at $\bx\in\R^n$ if:
    \begin{enumerate}
        \item $L_{\bg}L_{\bf}^{i}\psi(\bx)=\bzero,\quad \forall i\in\{0,\dots,r-2\},$
        \item $L_{\bg}L_{\bf}^{r-1}\psi(\bx)\neq\bzero.$
    \end{enumerate}
    Similarly, $\psi$ is said to have relative degree $r$ on a set $\mc{E}\subseteq\R^n$ if it has relative degree for all $\bx\in\mc{E}$.
\end{definition}

To define HOCBFs, consider a state constraint set $\mc{C}\subset\R^n$ as in \eqref{eq:C} defined by a smooth function $\psi\,:\,\R^n\rightarrow\R$.
Assuming that $\psi$ has relative degree $r\geq2$ on $\mc{C}$, define:
\begin{equation}\label{eq:psi-HOCBF}
    \psi_{i+1}(\bx) \coloneqq L_{\bf}\psi_{i}(\bx) + \alpha_{i}(\psi_{i}(\bx)),\quad \forall i\in\{0,\dots,r-2\},
\end{equation}
where ${\alpha_i\in\Ke}$ are smooth, with ${\psi_0(\bx)\coloneqq \psi(\bx)}$. This collection of functions produces a collection of sets:
\begin{equation}\label{eq:C-HOCBF}
    \mc{C}_i \coloneqq \left\{\bx\in\R^n\,:\,\psi_i(\bx) \geq 0 \right\},\quad\forall i\in\{0,\dots,r-1\}.
\end{equation}
These sets are used to define a candidate safe set as:
\begin{equation}\label{eq:S-HOCBF}
    \mc{S}\coloneqq \bigcap_{i=0}^{r-1}\mc{C}_i\subset\mc{C},
\end{equation}
which is a subset of the original constraint set $\mc{C} = \mc{C}_0$. The controlled invariance of this safe set can then be ensured through the existence of a HOCBF. 

\begin{definition}[\cite{TanTAC22}]\label{def:hocbf}
    A smooth function $\psi\,:\,\R^n\rightarrow\R$ defining a constraint set $\mc{C}\subset\R^n$ as in \eqref{eq:C} is said to be a HOCBF for \eqref{eq:control-affine-dyn} on a set $\mc{S}\subset\mc{C}$ as in \eqref{eq:S-HOCBF} if there exists an open set $\mc{E}\supset\mc{S}$ and $\alpha\in\Ke$ such that for all $\bx\in\mc{E}$:
    \begin{equation}\label{eq:hocbf}
        \sup_{\bu\in\R^m}\left\{L_{\bf}\psi_{r-1}(\bx) + L_{\bg}\psi_{r-1}(\bx)\bu \right\}\geq - \alpha(\psi_{r-1}(\bx)).
    \end{equation}
\end{definition}
The main result with regard to HOCBFs is that any locally Lipschitz controller satisfying the above condition renders the set $\mc{S}$ forward invariant \cite{WeiTAC22,TanTAC22}. Since $\mc{S}\subset\mc{C}$, this ensures that trajectories remain within the constraint set $\mc{C}$ so long as they are defined. The original definition of a HOCBF \cite{WeiTAC22} does not explicitly require $\psi$ to have relative degree $r$; however, since $L_{\bg}\psi_{r-1}(\bx)=L_{\bg}L_{\bf}^{r-1}\psi(\bx)$, if $\psi$ has relative degree $r$ on $\mc{E}\supset\mc{S}$ then $\psi$ is a HOCBF since $L_{\bg}\psi_{r-1}(\bx)\neq\bzero$ for all $\bx\in\mc{E}$. 
The relative degree requirements of a HOCBF are formalized in \cite{TanTAC22} using the notion of a \emph{weak} relative degree.
\begin{definition}
    A smooth function $\psi\,:\,\R^n\rightarrow\R$ is said to have weak relative degree $r\in\mathbb{N}$ for \eqref{eq:control-affine-dyn} on a set $\mc{E}\subset\R^n$ if it has relative degree $r$ for at least one $\bx\in\mc{E}$ and $L_{\bg}L_{\bf}^{i}\psi(\bx)=\bzero,\, \forall i\in\{0,\dots,r-1\}$ for all other $\bx\in\mc{E}$.
\end{definition}
If $\psi$ has a weak relative degree, 
Lemma \ref{lemma:cbf} may be used to verify HOCBFs: $\psi$ is an HOCBF if $L_{\bf}\psi_{r-1}(\bx)\geq - \alpha(\psi_{r-1}(\bx))$ when $L_{\bg}\psi_{r-1}(\bx)=\bzero$.
Unfortunately, when $\psi$ 
has a relative degree that is weak, it is often not a HOCBF.

\begin{example}[\cite{CohenARC24}]\label{ex:motivating-example}
    Consider a double integrator with state $\bx=(x,\dot{x})\in\R^2$ subject to the following safety constraint:
    $$
    \dot \bx = \begin{bmatrix}
        \dot{x},u
    \end{bmatrix}^\top,\quad \psi(\bx) = 1 - x^2\geq0.
    $$
    By computing $L_{\bg}L_{\bf}\psi(\bx)=-2x$, we have that $\psi$ has relative degree $r=2$ everywhere except when $x=0$. The auxiliary function as in \eqref{eq:psi-HOCBF} is:
    $$\psi_1(\bx)=-2x\dot{x} + \alpha_0(1 - x^2).$$
    For $\psi$ to be a HOCBF, condition \eqref{eq:hocbf} requires $L_{\bf}\psi_{1}(\bx) \geq -\alpha(\psi_1(\bx))$ for all points $\bx\in\mc{S}$
    whenever $L_{\bg}\psi_{1}(\bx)=L_{\bg}L_{\bf}\psi(\bx)=-2x=0$,
    which gives:
    \begin{align*}
    -2\dot{x}^2 - 2\alpha_0'(1 - x^2)x\dot{x} &\geq -\alpha\big(-2x\dot{x}+\alpha_0(1 - x^2)\big)\\
    -2\dot{x}^2 &\geq -\alpha(\alpha_0(1)).
    \end{align*}
    Since $\mc{S} = \{\bx\in\R^2|~\psi(\bx)\geq0,~\psi_1(\bx)\geq0\}$, $\dot{x}$ may take any value when $x=0$ (see Fig. \ref{fig:hocbf}), so we require the inequality above to hold for all $(0,\dot{x})\in\R^2$. Because the right-hand side is constant, there is no $\alpha\in\Ke$ satisfying the inequality for all $(0,\dot{x})\in\mc{S}$, which implies that $\psi$ is not a HOCBF. In particular, when $x=0$ and $|\dot{x}|\geq\sqrt{\alpha(\alpha_0(1))/2}$, there exists no input satisfying \eqref{eq:hocbf}, and, as a result, controllers synthesized using this candidate HOCBF will be ill-defined. For instance, the resulting quadratic programming-based controller (cf. \cite{WeiTAC22}) tends to infinity as $x\rightarrow0$ for $\dot x$ large enough (see Fig.~\ref{fig:hocbf}, right), causing the closed-loop dynamics to exhibit finite escape times (see Fig. \ref{fig:hocbf}, left).
\end{example}

\begin{figure}
    \centering
    \includegraphics{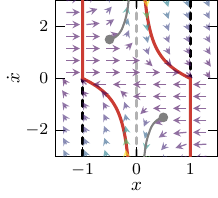}
    \hfill
    \includegraphics{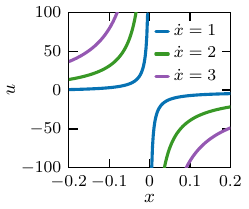}
    \vspace{-0.4cm}
    \caption{\textbf{Left}: Safe set $\mathcal{S}$ induced by the HOCBF candidate from Example \ref{ex:motivating-example}, where the dashed black lines denote the boundary of the constraint set $\mc{C}$, the solid red curves denote the boundary of the safe set $\mc{S}$, the arrows denote the closed-loop vector field under the resulting quadratic programming-based controller (lighter arrows correspond to larger magnitude), the gray curves illustrate example closed-loop trajectories, and the gray dots denote the initial conditions of such trajectories. \textbf{Right}: Input generated by the resulting HOCBF controller for fixed values of $\dot{x}$ as $x$ is varied.}
    \label{fig:hocbf}
\end{figure}

Even when $\psi$ can be verified as a HOCBF, it does not qualify as a CBF in the usual sense. Specifically, the safe set is the zero superlevel set of neither $\psi$ nor $\psi_{r-1}$ but the set intersection defined by \eqref{eq:S-HOCBF}. A limitation of this paradigm is that results for CBFs (e.g., stability and robustness) do not trivially transfer to HOCBFs. In what follows, we present a procedure similar to HOCBFs for constructing CBFs that overcomes these aforementioned limitations.

\section{Rectified Control Barrier Functions}\label{sec:new-backstepping}
\subsection{Weak Relative Degree Two}
The core idea of our approach lies in an activation strategy for HOCBFs. To simplify the discussion and facilitate comparison with other methods, we restrict ourselves to safety constraints  $\psi$ with (weak) relative degree $r=2$ in this section. HOCBFs aim to indirectly render $\psi$ positive along the trajectory by ensuring that $\psi_1(\bx)=L_{\bf}\psi(\bx)+\alpha(\psi(\bx))\geq 0$ along the trajectory, which is achieved by enforcing a CBF-like condition \eqref{eq:hocbf} on $\psi_1$. While such an approach uses the input even when $\psi_1(\bx)\geq 0$, our approach will only invoke the input if necessary,  when $\psi_1(\bx)<0$. 

To this end, we propose the following CBF candidate:
\begin{equation}\label{eq:h-rel-deg-2}
        h(\bx) \coloneqq \psi(\bx) - \ReLU\Big(-\gamma\big(L_{\bf}\psi(\bx) + \alpha(\psi(\bx))\big)\Big),
\end{equation}
with $\ReLU(r)\coloneqq\max\{0,r\}$ the Rectified Linear Unit, continuously differentiable ${\gamma\in\Kinfe}$, ${\gamma'(s)=0\iff s=0}$, and continuously differentiable $\alpha\in\Ke$. Note that one may verify that $\Theta(s) \coloneqq \mathrm{ReLU}(-\gamma(s))$ is continuously differentiable.
The motivation behind \eqref{eq:h-rel-deg-2} is that when the unforced dynamics of \eqref{eq:control-affine-dyn} are safe with $L_{\bf}\psi(\bx) \geq -\alpha(\psi(\bx))$, the second term in \eqref{eq:h-rel-deg-2} is ``deactivated" since it is not required to enforce safety, yielding $h(\bx)=\psi(\bx)$.
We thus refer to \eqref{eq:h-rel-deg-2} as a \textbf{\emph{rectified} CBF (ReCBF)} as higher order terms required to enforce safety are only activated when~$\psi_1(\bx)$ is negative. The following theorem states that, under certain assumptions, ReCBFs are valid CBFs.
\begin{theorem}\label{theorem:new-backstepping-rel-deg-2}
    Consider system \eqref{eq:control-affine-dyn}, a constraint set $\mc{C}\subset\R^n$ defined by a smooth $\psi\,:\,\R^n\rightarrow\R$ as in \eqref{eq:C}, and a set $\mc{S}\subset\mc{C}$ as in \eqref{eq:S} defined by $h\,:\,\R^n\rightarrow\R$ from \eqref{eq:h-rel-deg-2}. If there exists an open set $\mc{E}\supset\mc{S}$ such that $\psi$ has weak relative degree $r=2$ on $\mc{E}$ and:
    \begin{equation}\label{eq:LgLfpsi}
        L_{\bg}L_{\bf}\psi(\bx) = \bzero \implies L_{\bf}\psi(\bx) \geq - \alpha(\psi(\bx)),
    \end{equation}
    for all $\bx\in\mc{E}$, then the ReCBF $h$ is a CBF for \eqref{eq:control-affine-dyn} on $\mc{S}$.
\end{theorem}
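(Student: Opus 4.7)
The plan is to verify $h$ as a CBF via Lemma~\ref{lemma:cbf}: it suffices to show that $L_{\bg}h(\bx)=\bzero$ implies $L_{\bf}h(\bx) \geq -\tilde\alpha(h(\bx))$ on $\mc{E}$ for some $\tilde\alpha\in\Ke$. The mechanism I would exploit is that the rectification in \eqref{eq:h-rel-deg-2} switches off precisely when the unforced dynamics are already safe, so whenever $L_{\bg}h$ fails to expose any control authority, the analysis should reduce to the base case $h=\psi$ together with the trivial CBF-like inequality on $\psi$ itself.

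Concretely, set $\Theta(s)\coloneqq\ReLU(-\gamma(s))$ and $\psi_1(\bx)\coloneqq L_{\bf}\psi(\bx)+\alpha(\psi(\bx))$. First I would record that $\Theta$ is $C^1$ with $\Theta'(s)=0$ for $s\geq 0$ and $\Theta'(s)=-\gamma'(s)<0$ for $s<0$, the matching at $s=0$ being guaranteed by the imposed condition $\gamma'(0)=0$. Differentiating \eqref{eq:h-rel-deg-2} along $\bg$ and using $L_{\bg}\psi\equiv\bzero$ on $\mc{E}$ (from the weak relative degree two hypothesis) yields
\begin{equation*}
L_{\bg}h(\bx) = -\Theta'(\psi_1(\bx))\, L_{\bg}L_{\bf}\psi(\bx),
\end{equation*}
since $L_{\bg}\psi_1(\bx)=L_{\bg}L_{\bf}\psi(\bx)+\alpha'(\psi(\bx))L_{\bg}\psi(\bx)=L_{\bg}L_{\bf}\psi(\bx)$. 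Hence $L_{\bg}h(\bx)=\bzero$ forces either $\psi_1(\bx)\geq 0$ (so that $\Theta'(\psi_1(\bx))=0$) or $L_{\bg}L_{\bf}\psi(\bx)=\bzero$.

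Next I would collapse these two subcases using hypothesis \eqref{eq:LgLfpsi}: at any point where $L_{\bg}L_{\bf}\psi(\bx)=\bzero$, we have $L_{\bf}\psi(\bx)\geq -\alpha(\psi(\bx))$, i.e., $\psi_1(\bx)\geq 0$. So in every case covered by $L_{\bg}h(\bx)=\bzero$ we land on $\psi_1(\bx)\geq 0$, whence $\Theta(\psi_1(\bx))=0$ gives $h(\bx)=\psi(\bx)$, and $\Theta'(\psi_1(\bx))=0$ gives $L_{\bf}h(\bx)=L_{\bf}\psi(\bx)$. The target inequality then reduces to $L_{\bf}\psi(\bx)\geq -\alpha(\psi(\bx))$, which is exactly $\psi_1(\bx)\geq 0$. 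Taking $\tilde\alpha=\alpha$ in Lemma~\ref{lemma:cbf} closes the argument.

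The main friction I expect is not computational but technical bookkeeping around the nonsmoothness of $\ReLU$: one must justify that $\Theta$ is genuinely $C^1$ at $s=0$ so that $L_{\bg}h$ and $L_{\bf}h$ are well-defined everywhere on $\mc{E}$, and one must cleanly handle the degenerate locus where the weak relative degree drops, absorbing it through \eqref{eq:LgLfpsi}. Once those two points are addressed, the remainder is a direct reduction to the base CBF inequality on $\psi$.
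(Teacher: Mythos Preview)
Your proposal is correct and follows essentially the same route as the paper: compute $L_{\bg}h(\bx)=-\Theta'(\psi_1(\bx))L_{\bg}L_{\bf}\psi(\bx)$ using $L_{\bg}\psi\equiv\bzero$ from the weak relative degree two assumption, collapse the two vanishing cases via \eqref{eq:LgLfpsi} to conclude $\psi_1(\bx)\geq0$, and then observe that this reduces both $h$ and $L_{\bf}h$ to $\psi$ and $L_{\bf}\psi$ so that Lemma~\ref{lemma:cbf} applies with the same $\alpha$. Your explicit verification that $\Theta$ is $C^1$ at $s=0$ (via $\gamma'(0)=0$) is a helpful clarification the paper states just before the theorem rather than inside the proof.
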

\begin{proof}
    We will leverage Lemma \ref{lemma:cbf} to show that $h$ is a CBF. We begin by computing the derivative of $h$ along \eqref{eq:control-affine-dyn}:
    \begin{equation*}
        \begin{aligned}
            \dot{h}
            = & L_{\bf}\psi(\bx) - \Theta'\left(L_{\bf}\psi(\bx) + \alpha(\psi(\bx))\right) \\
            & \times \big(L_{\bf}^2\psi(\bx) + L_{\bg}L_{\bf}\psi(\bx)\bu + \alpha'(\psi(\bx))L_{\bf}\psi(\bx) \big),
        \end{aligned}
    \end{equation*}
    where $\Theta(s) \coloneqq \mathrm{ReLU}(-\gamma(s))$.  From above, we identify:
    \begin{equation*}
        \begin{aligned}
            L_{\bf}h(\bx) = & L_{\bf}\psi(\bx) - \Theta'\left(L_{\bf}\psi(\bx) + \alpha(\psi(\bx))\right)L_{\bf}^2\psi(\bx) \\ & - \Theta'\left(L_{\bf}\psi(\bx) + \alpha(\psi(\bx))\right)\alpha'(\psi(\bx))L_{\bf}\psi(\bx) \\ 
             L_{\bg}h(\bx) = &  -\Theta'\left(L_{\bf}\psi(\bx) + \alpha(\psi(\bx))\right)L_{\bg}L_{\bf}\psi(\bx).
        \end{aligned}
    \end{equation*}
    Thus, $L_{\bg}h(\bx)=\bzero$ if and only if:
    \begin{equation*}
        \Theta'\left(L_{\bf}\psi(\bx) + \alpha(\psi(\bx))\right)=0 \; \vee \; L_{\bg}L_{\bf}\psi(\bx)=\bzero.
    \end{equation*}
    However, since \eqref{eq:LgLfpsi} holds and:
    \begin{equation*}
        \Theta'(s)= \begin{cases}
        0 & \text{if }s \geq 0 \\
        -\gamma'(s) & \text{if } s < 0,
    \end{cases}
    \end{equation*}
    we also have:
    \begin{equation*}
        \begin{aligned}
            L_{\bg}L_{\bf}\psi(\bx) = \bzero \implies & L_{\bf}\psi(\bx) \geq - \alpha(\psi(\bx)) \\ \implies &  \Theta'\left(L_{\bf}\psi(\bx) + \alpha(\psi(\bx))\right)=0,
        \end{aligned}
    \end{equation*}
    which implies that:
    \begin{equation*}
    \begin{aligned}
        L_{\bg}h(\bx) = \bzero \iff & \Theta'\left(L_{\bf}\psi(\bx) + \alpha(\psi(\bx))\right) = 0 \\ 
        \iff & L_{\bf}\psi(\bx) \geq -\alpha(\psi(\bx)).
    \end{aligned}
    \end{equation*}
     Thus, for all $\bx\in\mc{E}$ such that $L_{\bg}h(\bx)=\bzero$, we have $L_{\bf}\psi(\bx) \geq -\alpha(\psi(\bx))$, while the expressions of $h$ and $L_{\bf}h$ yield $h(\bx)=\psi(\bx)$ and $L_{\bf}h(\bx)=L_{\bf}\psi(\bx)$, which leads to:
    \begin{equation*}
    \begin{aligned}
        L_{\bf}h(\bx) = L_{\bf}\psi(\bx) \geq -\alpha(\psi(\bx))= -\alpha(h(\bx)),
    \end{aligned}
    \end{equation*}
    and implies that
    $h$ is a CBF for \eqref{eq:control-affine-dyn} on $\mc{S}$ by Lemma \ref{lemma:cbf}.
\end{proof}
An immediate corollary to the above is that if $\psi$ has relative degree two on a set $\mc{E}\supset\mc{S}$, then the ReCBF $h$ in \eqref{eq:h-rel-deg-2} is a CBF. On the other hand, when the relative degree is weak, condition~\eqref{eq:LgLfpsi} must hold, which is a requirement on the constraint function $\psi$ rather than the auxiliary function $\psi_1$ for HOCBFs.  The controlled invariant set $\mc{S}$ produced by the ReCBF in Theorem~\ref{theorem:new-backstepping-rel-deg-2} is contained within the original constraint set $\mc{C}$ because $\Theta$ is nonnegative and ${h(\bx)\geq0\implies \psi(\bx)\geq0}$. Thus, any controller rendering $\mc{S}$ forward invariant ensures that $\bx(t)\in\mc{C}$ for all $t\in I(\bx_0)$. We conclude this subsection by showcasing the properties of these CBFs compared to other CBF constructions. 

\begin{figure}
    \centering
    \includegraphics{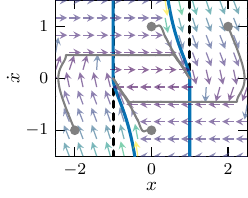}
    \hfill
    \includegraphics{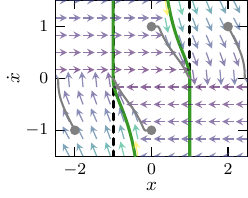}
    \vspace{-0.4cm}
    \caption{\textbf{Left:} Safe set induced by ReCBF \eqref{eq:h-rel-deg-2} for Example \ref{ex:new-cbf} (blue curve), where all other plot elements share the same interpretation as those in Fig. \ref{fig:hocbf}.
    \textbf{Right:} Safe set induced by the CBF \eqref{eq:h-Breeden} for Example \ref{ex:Breeden-cbf} (green curve).}
    \label{fig:new-v-old-cbfs}
\end{figure}

\begin{example}[Comparison to HOCBFs \cite{WeiTAC22}]\label{ex:new-cbf}
    We consider the same scenario as in Example \ref{ex:motivating-example} but now attempt to construct a ReCBF using Theorem \ref{theorem:new-backstepping-rel-deg-2}. Recall that $L_{\bg}L_{\bf}\psi(\bx)=-2x$ and note that $L_{\bf}\psi(\bx)=-2x\dot{x}$ so that $L_{\bg}L_{\bf}\psi(\bx)=0$ implies $x=0$, $L_{\bf}\psi(\bx)=0$, and $\alpha(\psi(\bx))=\alpha(1)$. For $h$ in \eqref{eq:h-rel-deg-2} to be a CBF, \eqref{eq:LgLfpsi} must hold, and it does indeed hold since:
    \begin{equation*}
        L_{\bg}L_{\bf}\psi(\bx)=0 \implies L_{\bf}\psi(\bx) + \alpha(\psi(\bx)) = \alpha(1)\geq0.
    \end{equation*}
    The safe set corresponding to the ReCBF in \eqref{eq:h-rel-deg-2} defined with $\gamma(r)=r|r|$ is plotted with a few example trajectories in Fig. \ref{fig:new-v-old-cbfs} (left), showing safety in accordance with Theorem~\ref{theorem:new-backstepping-rel-deg-2}.
\end{example}

\begin{example}[Comparison to \cite{BreedenAutomatica23}]\label{ex:Breeden-cbf}
    In \cite{BreedenAutomatica23} it is shown that under conditions similar to Theorem \ref{theorem:new-backstepping-rel-deg-2}, the function:
    \begin{equation}\label{eq:h-Breeden}
        h(\bx) = \begin{cases}
            \psi(\bx) & \mathrm{if }\ L_{\bf}\psi(\bx) \geq 0 \\ 
            \psi(\bx) - \frac{1}{2} L_{\bf}\psi(\bx)^2 & \mathrm{if }\ L_{\bf}\psi(\bx) < 0, 
        \end{cases}
    \end{equation}
    is a CBF. A comparison between the zero superlevel sets of the ReCBF $h$ from \eqref{eq:h-rel-deg-2} and CBF from \eqref{eq:h-Breeden} are shown in Fig. \ref{fig:new-v-old-cbfs}, where the sets are almost identical. However, under controllers generated by the ReCBF in \eqref{eq:h-rel-deg-2}, the set $\mc{S}$ is not only forward invariant but also asymptotically stable, as the CBF condition \eqref{eq:cbf} holds not only on $\mc{S}$ but also outside\footnote{In particular, \eqref{eq:cbf} holds for \eqref{eq:h-rel-deg-2} outside of $\mc{S}$ so long as $\psi$ has relative degree $r=2$ outside of $\mc{S}$. On the other hand, even if $\psi$ has relative degree $r=2$ outside of $\mc{S}$, one may show that \eqref{eq:cbf} is violated for \eqref{eq:h-Breeden} at points satisfying $L_{\bf}\psi(\bx)=0$.} of $\mc{S}$ \cite{AmesTAC17}. In contrast, one may show that the CBF condition \eqref{eq:cbf} for \eqref{eq:h-Breeden} does not necessarily hold outside of $\mc{S}$, leading to failure of convergence back to $\mc{S}$. This phenomenon is illustrated in Fig. \ref{fig:new-v-old-cbfs}, where trajectories under ReCBF controllers from \eqref{eq:h-rel-deg-2} stabilize $\mc{S}$ while those corresponding to \eqref{eq:h-Breeden} do not.
\end{example}

\begin{example}[Comparison to backstepping \cite{AndrewCDC22}]\label{ex:backstepping}
Another approach to constructing CBFs is via backstepping \cite{AndrewCDC22,CohenLCSS24}. Here, one considers a safety constraint $\psi$ as in \eqref{eq:C} with weak relative degree $r\geq2$, designs a smooth CBF controller \cite{CohenLCSS23} under the assumption that $\psi$ is a CBF for a single integrator, and then ``backsteps" through this smooth controller to construct a CBF for the original system. More details are available in \cite{AndrewCDC22,CohenLCSS24,CohenARC24}, but for the scenario in Example \ref{ex:motivating-example}, this backstepping CBF is:
\begin{equation}\label{eq:backstepping-cbf}
    h(\bx) = \psi(\bx) - \tfrac{1}{2}(\dot{x} - k(x))^2,
\end{equation}
where $k\,:\,\R\rightarrow\R$ satisfies $\pdv{\psi}{x}(\bx)k(x)\geq - \alpha(\psi(\bx))$. The safe set resulting from this CBF is illustrated in Fig.~\ref{fig:backstepping-cbf} (left) and is shown to be more conservative than the safe set corresponding to the ReCBF from \eqref{eq:h-rel-deg-2} in Fig. \ref{fig:new-v-old-cbfs} (left). However, under appropriate assumptions, the high-level approach in this paper may also be extended to backstepping via taking:
\begin{equation}\label{eq:backstepping-cbf-new}
    h(\bx) = \psi(\bx) - \ReLU\Big(-\gamma\Big(\pdv{\psi}{x}(\bx)\big(\dot{x} - k(x)\big)\Big)\Big),
\end{equation}
with $\ReLU$ as in \eqref{eq:h-rel-deg-2}. The safe set corresponding to this CBF is illustrated in Fig. \ref{fig:backstepping-cbf} (right) and is shown to be similar to that obtained from \eqref{eq:h-rel-deg-2}. While the results in this paper may be extended to backstepping, this would require one to assume \eqref{eq:control-affine-dyn} is in strict feedback form \cite{AndrewCDC22} or that $\psi$ is a function of an output with a valid relative degree \cite{CohenLCSS24}, whereas the current formulation does not require these assumptions. 
\end{example}

\begin{figure}
    \centering
    \includegraphics{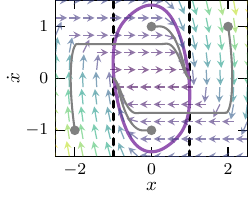}
    \hfill
    \includegraphics{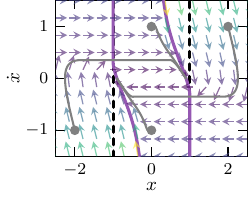}
    \vspace{-0.4cm}
    \caption{\textbf{Left:} Safe set induced by the CBF \eqref{eq:backstepping-cbf} for Example \ref{ex:backstepping} (purple curve), where all other plot elements share the same interpretation as those in Fig. \ref{fig:hocbf}.
    \textbf{Right:} Safe set induced by the CBF \eqref{eq:backstepping-cbf-new} for Example \ref{ex:backstepping}.}
    \label{fig:backstepping-cbf}
\end{figure}

\begin{remark}\label{remark:strict}
    It is often useful to define CBFs \eqref{eq:cbf} with a strict inequality to establish continuity of optimization-based controllers \cite{jankovic2018robust}. The construction in Theorem \ref{theorem:new-backstepping-rel-deg-2} can be modified to produce a CBF satisfying \eqref{eq:cbf} strictly by redefining \eqref{eq:h-rel-deg-2} with $\ReLU(-\gamma(s-\varepsilon))$
    for $\varepsilon>0$, provided \eqref{eq:LgLfpsi} is changed to:
    \begin{equation*}
        L_{\bg}L_{\bf}\psi(\bx) = \bzero \implies L_{\bf}\psi(\bx) \geq - \alpha(\psi(\bx)) + \varepsilon.
    \end{equation*}
\end{remark}
\subsection{Mixed-input relative degrees}
ReCBFs also enable the use of control inputs that appear in higher derivatives of $\psi$ beyond their (weak) relative degree. For example, $\psi$ is a CBF if it has weak relative degree one and
\eqref{eq:Lgh=0} is satisfied:
$$
L_{\bg}\psi(\bx) = \bzero \implies \psi_1(\bx)=L_{\bf}\psi(\bx) +\alpha(\psi(\bx))\geq 0.
$$
When this condition does not hold, the control input may still appear in higher order derivatives of $\psi$, and, unlike HOCBFs, ReCBFs permit the use of such higher order Lie derivatives despite that fact that $L_{\bg}\psi(\bx)\not \equiv \bzero$.

\begin{theorem}\label{theorem:mixed}
    Consider system \eqref{eq:control-affine-dyn}, a constraint set $\mc{C}\subset\R^n$ defined by a smooth $\psi\,:\,\R^n\rightarrow\R$ as in \eqref{eq:C}, and a set $\mc{S}\subset\mc{C}$ as in \eqref{eq:S} defined by $h\,:\,\R^n\rightarrow\R$ from \eqref{eq:h-rel-deg-2}. If there exists an open set $\mc{E}\supset\mc{S}$ such that $L_{\bg}\psi(\bx)$ and $L_{\bg}L_{\bf}\psi(\bx)$ are linearly independent whenever they are nonzero on $\mc{E}$ and:
    \begin{equation*}
        L_{\bg}\psi(\bx)=L_{\bg}L_{\bf}\psi(\bx) = \bzero \implies L_{\bf}\psi(\bx) \geq - \alpha(\psi(\bx)),
    \end{equation*}
    for all $\bx\in\mc{E}$, then the ReCBF $h$ is a CBF for \eqref{eq:control-affine-dyn} on $\mc{S}$.
\end{theorem}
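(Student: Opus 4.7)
The plan is to mirror the structure of Theorem \ref{theorem:new-backstepping-rel-deg-2}'s proof and invoke Lemma \ref{lemma:cbf} by analyzing precisely when $L_{\bg}h(\bx) = \bzero$. I would begin by differentiating $h$ from \eqref{eq:h-rel-deg-2} along \eqref{eq:control-affine-dyn}. The calculation proceeds exactly as before except that $L_{\bg}\psi(\bx)$ now survives, yielding
\begin{equation*}
    L_{\bg}h(\bx) = L_{\bg}\psi(\bx) - \Theta'(\sigma(\bx))\big(L_{\bg}L_{\bf}\psi(\bx) + \alpha'(\psi(\bx))L_{\bg}\psi(\bx)\big),
\end{equation*}
where $\Theta(s)=\ReLU(-\gamma(s))$ and $\sigma(\bx)\coloneqq L_{\bf}\psi(\bx)+\alpha(\psi(\bx))$, while the expression for $L_{\bf}h(\bx)$ is unchanged from the proof of Theorem \ref{theorem:new-backstepping-rel-deg-2}.

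I would then split into two cases by the sign of $\sigma(\bx)$. In the easy case $\sigma(\bx)\geq 0$, we have $\Theta'(\sigma(\bx))=0$ and $\ReLU(-\gamma(\sigma(\bx)))=0$, so $h(\bx)=\psi(\bx)$ and $L_{\bf}h(\bx)=L_{\bf}\psi(\bx)$. The conclusion $L_{\bf}h(\bx)\geq -\alpha(h(\bx))$ is then immediate from $\sigma(\bx)\geq 0$, regardless of whether $L_{\bg}h(\bx)$ actually vanishes.

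The main obstacle is to show that the remaining case $\sigma(\bx)<0$ cannot produce $L_{\bg}h(\bx)=\bzero$. Here $\Theta'(\sigma(\bx)) = -\gamma'(\sigma(\bx)) < 0$, using that $\gamma'(s)=0\iff s=0$, and since $\alpha'(\psi(\bx))\geq 0$ by monotonicity of $\alpha\in\Ke$, both coefficients $\lambda_1\coloneqq 1 - \Theta'(\sigma(\bx))\alpha'(\psi(\bx))$ and $\lambda_2\coloneqq -\Theta'(\sigma(\bx))$ in the rearranged identity
\begin{equation*}
    \lambda_1 L_{\bg}\psi(\bx) + \lambda_2 L_{\bg}L_{\bf}\psi(\bx) = \bzero
\end{equation*}
are strictly positive. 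I would then eliminate every sub-case: if both $L_{\bg}\psi(\bx)$ and $L_{\bg}L_{\bf}\psi(\bx)$ vanish, the theorem's hypothesis forces $\sigma(\bx)\geq 0$, contradicting the case assumption; if exactly one vanishes, positivity of $\lambda_1,\lambda_2$ forces the other to vanish too, a contradiction; and if both are nonzero, the linear-independence hypothesis forbids the relation above with positive coefficients. Hence every point with $L_{\bg}h(\bx)=\bzero$ lies in the easy case, and Lemma \ref{lemma:cbf} certifies $h$ as a CBF on $\mc{S}$.
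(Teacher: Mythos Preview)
Your proposal is correct and follows essentially the same approach as the paper. The paper's proof is very terse: it first asserts that the linear-independence hypothesis forces $L_{\bg}\psi(\bx)=\bzero$ whenever $L_{\bg}h(\bx)=\bzero$, then reduces to the proof of Theorem~\ref{theorem:new-backstepping-rel-deg-2}. Your version reaches the same conclusion by a direct case split on the sign of $\sigma(\bx)$, and in fact makes explicit the key fact the paper leaves implicit---namely that the coefficient $\lambda_1 = 1 - \Theta'(\sigma(\bx))\alpha'(\psi(\bx)) \geq 1$ is always strictly positive, which is what actually drives the linear-independence argument.
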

\begin{proof}
    Due to the linear independence assumption, we have $L_{\bg}\psi(\bx)=\bzero$ if $L_{\bg}h(\bx)=\bzero$. By following the proof of Theorem~\ref{theorem:new-backstepping-rel-deg-2}, we get $L_{\bg}h(\bx)=\bzero$ if and only if $L_{\bg}\psi(\bx)=\bzero$ and $L_{\bf}\psi(\bx)\geq-\alpha(\psi(\bx))$, which implies $h$ is a CBF, as in the proof of Theorem \ref{theorem:new-backstepping-rel-deg-2}.
\end{proof}
Theorem~\ref{theorem:mixed} suggests our approach can leverage inputs present in higher-order Lie derivatives when those appearing in lower-order Lie derivatives are insufficient to enforce safety. Similar to backstepping \cite{AndrewCDC22}, this facilitates the synthesis of controllers from mixed relative degree constraints, a situation in which HOCBFs struggle without employing additional techniques such as integral control \cite{WeiACC22}.

\section{Higher Relative Degree ReCBF}
In this section, we extend our results to safety constraints with weak relative degree greater than two.

\begin{definition}
Consider a constraint set~$\mc{C}\subset\R^n$  defined by a smooth $\psi$ as in \eqref{eq:C} with weak relative degree $r\geq2$ on $\mc{E}\subset\R^n$. With $\alpha_{i}(s)\geq\alpha_{i-1}(s)$ for all $s\in \R$, define iteratively:
\begin{subequations}\label{eq:cbf_iteration}
    \begin{align}
\psi_i(\bx) & \coloneqq L_{\bf}h_{i-1}(\bx)+\alpha_{i-1}(h_{i-1}(\bx)), \label{eq:cbf_iteration-psi}\\
h_i(\bx) & \coloneqq h_{i-1}(\bx)-\ReLU(-\gamma_i(\psi_i(\bx))),\label{eq:cbf_iteration-h}
\end{align}
\end{subequations}
for $i \in \mc{I} = \{1,\ldots,r-1\}$, starting with  $h_0(\bx)\coloneqq \psi(\bx)$.
The corresponding \textbf{Rectified CBF (ReCBF)} is defined as: 
\begin{equation}\label{eq:h_general}
h(\bx) \coloneqq h_{r-1}(\bx) = \psi(\bx)- \sum_{i\in \mc{I}}\ReLU(-\gamma_i(\psi_i(\bx))),
\end{equation}
with smooth $\gamma_i\in\Kinfe$, $\alpha_i\in\Ke$, and $\gamma'_i(s)=0\iff s=0$. 
\end{definition}

A ReCBF defines a candidate safe set $\mc{S}$ as in \eqref{eq:S}. Similar to the previous section we have $h(\bx)\geq0\implies \psi(\bx)\geq0$ so that rendering $\mc{S}$ forward invariant ensures satisfaction of the original state constraint. The following result outlines conditions for when a ReCBF $h$ is a valid CBF.
\begin{theorem}\label{theorem:augmented-cbf-rel-deg-r}
    Consider system \eqref{eq:control-affine-dyn}, a constraint set $\mc{C}\subset\R^n$ defined by a smooth $\psi\,:\,\R^n\rightarrow\R$ as in \eqref{eq:C}, and a set $\mc{S}\subset\mc{C}$ as in \eqref{eq:S} defined by a ReCBF $h\,:\,\R^n\rightarrow\R$ from \eqref{eq:h_general}.
    Provided there exists an open set $\mc{E}\supset\mc{S}$ such that $\psi$ has weak relative degree $r$ on $\mc{E}$ and:
    \begin{equation}\label{eq:higher_CBC}
        L_{\bg}L_{\bf}^{r-1}\psi(\bx)=\bzero \implies
        \exists~i \in \mc{I},~\psi_i(\bx)\geq 0,
    \end{equation}
    for all $\bx\in \mc{E}$, then the ReCBF $h$ is a CBF for \eqref{eq:control-affine-dyn} on $\mc{S}$.
\end{theorem}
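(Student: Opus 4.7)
The plan is to invoke Lemma~\ref{lemma:cbf} and extend the argument used in the proof of Theorem~\ref{theorem:new-backstepping-rel-deg-2}: I would show that any $\bx\in\mc{E}$ with $L_{\bg}h(\bx)=\bzero$ satisfies $L_{\bf}h(\bx)\geq -\alpha(h(\bx))$ for a suitable $\alpha\in\Ke$. Setting $\Theta_i(s)\coloneqq\ReLU(-\gamma_i(s))$, each $\Theta_i$ is continuously differentiable with $\Theta_i'(s)=0$ precisely when $s\geq 0$ and $\Theta_i'(s)=-\gamma_i'(s)\neq 0$ otherwise, so the ReCBF decomposes as $h=\psi-\sum_{i\in\mc{I}}\Theta_i(\psi_i)$ and its Lie derivatives split cleanly along these terms.

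The centerpiece is an inductive formula for the input channel at each rung of the iteration,
\begin{equation*}
    L_{\bg}h_k(\bx) = (-1)^k\prod_{i=1}^{k}\Theta_i'(\psi_i(\bx))\cdot L_{\bg}L_{\bf}^{k}\psi(\bx),
\end{equation*}
proved by induction on $k\in\{0,\dots,r-1\}$. The inductive step repeatedly invokes the weak relative degree hypothesis $L_{\bg}L_{\bf}^{j}\psi\equiv\bzero$ for $j\leq r-2$, which eliminates the lower-order cross terms arising when differentiating $\Theta_i'(\psi_i)L_{\bf}\psi_i$ along $\bg$; in particular, the $\Theta_i''$ contributions carry factors of $L_{\bg}\psi_i$ that vanish by a parallel subclaim $L_{\bg}\psi_j\equiv\bzero$ for $j\leq r-2$. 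Taking $k=r-1$ yields $L_{\bg}h(\bx)=(-1)^{r-1}\prod_{i\in\mc{I}}\Theta_i'(\psi_i(\bx))\cdot L_{\bg}L_{\bf}^{r-1}\psi(\bx)$, so $L_{\bg}h(\bx)=\bzero$ forces either $\Theta_i'(\psi_i(\bx))=0$ (equivalently, $\psi_i(\bx)\geq 0$) for some $i\in\mc{I}$, or $L_{\bg}L_{\bf}^{r-1}\psi(\bx)=\bzero$; in the latter case, hypothesis~\eqref{eq:higher_CBC} again supplies an index $i\in\mc{I}$ with $\psi_i(\bx)\geq 0$.

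Next I would establish a \emph{propagation lemma}: if $\psi_k(\bx)\geq 0$ for some $k\in\{1,\dots,r-2\}$, then $\psi_{k+1}(\bx)\geq 0$. Indeed, $\psi_k(\bx)\geq 0$ deactivates $\Theta_k$ at $\bx$ in both value and derivative, so $h_k(\bx)=h_{k-1}(\bx)$ and $L_{\bf}h_k(\bx)=L_{\bf}h_{k-1}(\bx)$; subtracting the two defining identities~\eqref{eq:cbf_iteration-psi} then gives $\psi_{k+1}(\bx)-\psi_k(\bx)=\alpha_k(h_{k-1}(\bx))-\alpha_{k-1}(h_{k-1}(\bx))\geq 0$ by the ordering $\alpha_k\geq\alpha_{k-1}$. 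Iterating this propagation from whichever index the previous paragraph supplied yields $\psi_{r-1}(\bx)\geq 0$.

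Finally, $\psi_{r-1}(\bx)\geq 0$ deactivates $\Theta_{r-1}$ at $\bx$, so $h(\bx)=h_{r-2}(\bx)$ and $L_{\bf}h(\bx)=L_{\bf}h_{r-2}(\bx)$; the defining identity $\psi_{r-1}(\bx)=L_{\bf}h_{r-2}(\bx)+\alpha_{r-2}(h_{r-2}(\bx))\geq 0$ then reads $L_{\bf}h(\bx)\geq -\alpha_{r-2}(h(\bx))$, and Lemma~\ref{lemma:cbf} applied with $\alpha=\alpha_{r-2}\in\Ke$ certifies that $h$ is a CBF on $\mc{S}$. The chief obstacle I anticipate is the inductive derivation of the product formula for $L_{\bg}h_k$: although its final form is clean, the bookkeeping of second-order $\Theta_i''$ contributions and the parallel subclaim $L_{\bg}\psi_j\equiv\bzero$ for $j\leq r-2$ each demand a careful level-by-level appeal to the weak relative degree hypothesis, with extra care given to the limited smoothness of $\Theta_i$ at the activation threshold $\psi_i=0$.
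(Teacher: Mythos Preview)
Your proposal is correct and mirrors the paper's proof: both derive the product formula $L_{\bg}h=(-1)^{r-1}\prod_{i\in\mc{I}}\Theta_i'(\psi_i)\,L_{\bg}L_{\bf}^{r-1}\psi$ from the weak relative degree hypothesis, use~\eqref{eq:higher_CBC} to extract an index with $\psi_i\geq 0$, propagate this upward via the ordering $\alpha_i\geq\alpha_{i-1}$, and close with Lemma~\ref{lemma:cbf}. The only cosmetic difference is that the paper carries the propagation one step further (to $\psi_r\geq 0$, concluding with $\alpha_{r-1}$ rather than your slightly sharper $\alpha_{r-2}$) and glosses over the $\Theta_i''$/smoothness bookkeeping you flag.
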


\begin{proof}
Examining the Lie derivative of $h_i$ along the control directions $\bg$, with $\Theta_i(s) \coloneqq \mathrm{ReLU}(-\gamma_i(s))$:
\begin{equation}\label{eq:Lghi}
    \begin{aligned}
    L_{\bg}h_i(\bx)= & L_{\bg}h_{i-1}(\bx) - \Theta_{i}'(\psi_{i}(\bx))L_{\bg}\psi_{i}(\bx)\\
    =& (1-\alpha_{i-1}'(h_{i-1}(\bx)))L_{\bg}h_{i-1}(\bx)\\ &- \Theta_{i}'(\psi_{i}(\bx))L_{\bg}L_{\bf}h_{i-1}(\bx).
    \end{aligned}
\end{equation}
A similar result also follows when the Lie derivative is taken along the vector field $\bf$, by replacing $\bg$ with $\bf$. Since $L_{\bg}L_{\bf}^i\psi(\bx)\equiv \bzero$ for $i<r-1$ from the weak relative degree assumption, we may ignore the first term with lower order Lie derivative and deduce from repeatedly substituting $h_{i-1}$:
$$
L_{\bg}h(\bx) = (-1)^{r-1}\left(\prod_{i=1}^{r-1}\Theta_i'(\psi_i(\bx))\right)L_{\bg}L_{\bf}^{r-1}\psi(\bx).
$$
To prove that $h$ is a CBF, we appeal to Lemma \ref{lemma:cbf}. Using a similar argument to that in the proof of Theorem \ref{theorem:new-backstepping-rel-deg-2}, \eqref{eq:higher_CBC} implies that $L_{\bg}h(\bx) = \bzero$ if and only if there exists $i \in \mc{I}$ such that $\Theta_i'(\psi_i(\bx))=0$, which occurs when $\psi_i(\bx)\geq 0$ for some $i \in \mc{I}$. Moreover, when $\psi_i(\bx)\geq 0$, we have: 
\begin{align*}
 L_{\bf}h_i(\bx) &= L_{\bf}h_{i-1}(\bx)-\Theta_i'(\psi_i(\bx))L_{\bf}\psi_i(\bx)\\
 &= L_{\bf}h_{i-1}(\bx)\geq -\alpha_{i-1}(h_{i-1}(\bx)) \\
 &= -\alpha_{i-1}(h_{i-1}(\bx)-\Theta_i(\psi_{i}(\bx)))\\
 &= -\alpha_{i-1}(h_{i}(\bx))\geq -\alpha_{i}(h_{i}(\bx)),
\end{align*}
where the inequalities are from the definition of $\psi_i$ and the ReCBF construction of $\alpha_{i}(s)\geq\alpha_{i-1}(s)$. As the above implies that ${\psi_{i+1}(\bx)=L_{\bf}h_i(\bx) + \alpha_i(h_i(\bx))\geq0}$, we may iteratively apply the same procedure to deduce that when $L_{\bg}h(\bx)=\bzero$, we have $\psi_{r}(\bx)=L_{\bf}h_{r-1}(\bx) + \alpha_{r-1}(h_{r-1}(\bx))\geq0$. Since $h(\bx)=h_{r-1}(\bx)$ this implies:
\begin{equation*}
    \begin{aligned}
        L_{\bg}h(\bx)=\bzero \implies & L_{\bf}h(\bx) \geq - \alpha_{r-1}(h(\bx))\geq - \alpha(h(\bx)),
    \end{aligned}
\end{equation*}
for any $\alpha\in\Ke$ satisfying $\alpha(s)\geq\alpha_{r-1}(s)$, which, by Lemma \ref{lemma:cbf}, implies $h$ is a CBF for \eqref{eq:control-affine-dyn} on $\mc{S}$, as desired.
\end{proof}

Theorem~\ref{theorem:augmented-cbf-rel-deg-r} recursively applies a similar methodology to that in Theorem \ref{theorem:new-backstepping-rel-deg-2} to construct a CBF from a safety constraint with an arbitrary weak relative degree. Since ReCBFs are CBFs, results on stability and robustness follow under regular assumptions. Also, similar to Theorem~\ref{theorem:new-backstepping-rel-deg-2}, a corollary to the above result is that \eqref{eq:h_general} is a CBF if $\psi$ has a relative degree on some set $\mc{E}\supset\mc{S}$.

\section{Numerical Examples}
We showcase the main ideas developed herein on an aircraft control problem. We consider simplified pitch dynamics of a fixed-wing aircraft described by \cite{lavretsky2012robust}:
\begin{equation*}
    \underbrace{
    \begin{bmatrix}
        \dot{\theta} \\ \dot{A}_{z}
    \end{bmatrix}}_{\dot{\bx}}
     = 
     \underbrace{
     \begin{bmatrix}
         \frac{g}{V_{T}}\left(A_{z} - \cos(\theta) \right)
         \\ -\frac{1}{\tau}A_{z}
     \end{bmatrix}}_{\bf(\bx)}
     +
     \underbrace{
     \begin{bmatrix}
         0 \\ \frac{1}{\tau}
     \end{bmatrix}}_{\bg(\bx)}
     u
\end{equation*}
where $\theta\in(-\pi,\pi)$ is the pitch angle, $A_{z}\in\R$ is the acceleration along the $z$-axis, $g\in\R_{>0}$ is the gravitational acceleration, $V_{T}\in\R_{>0}$ is the speed of the aircraft (assumed to be fixed), and the input $u\in\R$ denotes the commanded $A_{z}$, which passes through a first-order actuator model characterized by the time-constant $\tau\in\R_{>0}$. Our objective is to design a controller that tracks a prescribed pitch trajectory while enforce symmetric limits on the pitch $|\theta|\leq\theta_{\max}$, captured by the safety constraint $\psi(\bx)=\theta_{\max}^2 - \theta^2$. We verify that this safety constraint satisfies the conditions of our results by first noting that $L_{\bg}L_{\bf}\psi(\bx)=-\frac{2g}{\tau V_{T}}\theta$, implying that $\psi$ has weak relative degree two with $L_{\bg}L_{\bf}\psi(\bx)=0$ when $\theta=0$. Since $L_{\bf}\psi(x) = -2\theta\dot{\theta}$ we have $L_{\bg}L_{\bf}\psi(\bx)=0$ implies that $L_{\bf}\psi(\bx)=0 \geq -\alpha(\theta_{\max}^2)$, implying that the ReCBF $h$ from \eqref{eq:h-rel-deg-2} is a CBF for this system and safety constraint. 

We illustrate the benefits of ReCBFs via comparison to HOCBFs.
To construct a ReCBF we leverage \eqref{eq:h-rel-deg-2} with $\alpha(r)=\tfrac{1}{2}r$ and $\gamma(r)=r|r|$, incorporating $\varepsilon=0.1$ to ensure continuity of the resulting controller (cf. Remark \ref{remark:strict}). This ReCBF is used to construct a safety filter \cite{GurrietICCPS18} that modifies a nominal tracking controller to enforce safety.

The results of our ReCBF safety filter in comparison to an HOCBF safety filter \cite{WeiTAC22} (defined with the same $\alpha$) are illustrated in Fig. \ref{fig:pitch} and Fig. \ref{fig:input}. As seen in Fig. \ref{fig:pitch}, the pitch trajectory generated by ReCBF (blue curve) tracks the desired trajectory (green curve) when it is safe to do so, and prevents the pitch from exceeding its prescribed limits when the desired trajectory leaves the constraint set. The trajectory generated by the HOCBF controller (red curve) initially safely tracks the desired trajectory as well; however, solutions of the closed-loop system fail to exist beyond 10.9 seconds. As shown in Fig. \ref{fig:input}, the control input generated by the HOCBF controller tends to negative infinity as $\theta(t)\rightarrow0$, a point at which $L_{\bg}L_{\bf}\psi(\bx)=0$. Similar to Example \ref{ex:motivating-example}, one may verify that this $\psi$ does not satisfy Def. \ref{def:hocbf} and thus the resulting HOCBF controller is not necessarily well-defined. In contrast, our ReCBF allows trajectories to pass through points where $L_{\bg}L_{\bf}\psi(\bx)=0$, leading to a well-defined controller that handles singularities in $L_{\bg}L_{\bf}\psi(\bx)$.

\begin{figure}
    \centering
    \includegraphics{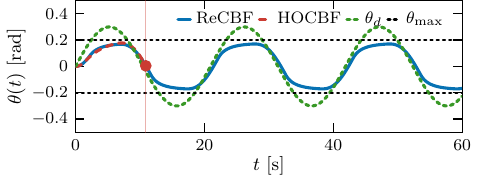}
    \vspace{-0.4cm}
    \caption{Evolution of the pitch angles for different controllers. The blue plot is induced by the ReCBF when used as a safety filter on a nominal control signal that seeks to track the pitch angle shown by the unsafe dotted green line. The red plot, induced by the HOCBF approach, does not have a valid solution after approximately 10.9 seconds.  }
    \label{fig:pitch}
\end{figure}

\begin{figure}
    \centering
    \includegraphics{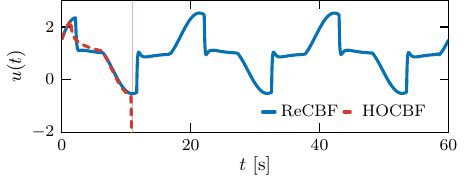} 
    \vspace{-0.4cm}
    \caption{Input signals for the ReCBF and HOCBF approach for the aircraft example. The HOCBF input goes unbounded as $L_{\bg}L_{\bf}\psi(\bx)\rightarrow0$.}
    \label{fig:input}
\end{figure}

\section{Discussion and Conclusions}
This paper introduces \emph{Rectified CBFs}: a tool for constructing CBFs for high relative degree constraints that overcomes limitations posed by 
traditional techniques, such as HOCBFs.
We provided detailed technical treatments for three scenarios: (i) relative degree two safety constraints, (ii) constraints where independent inputs affect derivatives of varying orders up to two and (iii) higher relative degree constraints.
We presented a comparative analysis of our approach with existing approaches. While our method offers some theoretical advantages over HOCBFs by handling constraints with weak relative degrees, it is not without its own limitations. The controllers generated by ReCBFs are sensitive to the various hyperparameters on which they depend, and improper tuning of these hyperparameters can lead to controllers with large Lipschitz constants that produce large input. Thus, characterizing the properties of these controllers in relation to their hyperparameters is an important direction for future work. Other future research directions include unifying our results on mixed and high relative degrees.

\bibliographystyle{ieeetr}
\bibliography{biblio}

\end{document}